\def\keyw{\par\medskip\noindent {\it Keywords:}\enspace\ignorespaces}
\newtheorem{thm}{Theorem}[section]
 \newtheorem{lem}[thm]{Lemma}
 \newtheorem{defn}[thm]{Definition}
\numberwithin{equation}{section}
\newtheorem{discu}{Discussion:}
\newtheorem{conje}{Conjecture:}
\begin{document}
%\textcolor{magenta}
%\begin{color}{magenta}
\date{}
%\doublespacing
\title{A note on minimum linear arrangement for BC graphs}
\author{\begin{tabular}{rcl}
         Xiaofang Jiang$^1$, Qinghui Liu$^{1,}$\thanks{This work is partially supported by
         National Natural Science Foundation of China, No. 11371055, No. 11571030.}
         ,
         N. Parthiban$^{2,}$\thanks{This work is partially supported by Project no. SR/S4/MS: 846/13, Department of Science and Technology, SERB, Government of India}~ and R. Sundara Rajan$^3$\\\\
        \end{tabular}\\
        \begin{tabular}{c}
          % after \\: \hline or \cline{col1-col2} \cline{col3-col4} ...
           $^1$Department of Computer Science, Beijing Institute of Technology, \\
           Beijing, China\\
           $^2$School of Advanced Sciences, VIT University, Chennai, India\\
           $^3$Department of Mathematics, Hindustan Institute of Technology and Science, \\
           Chennai, India \\
             \textsl{parthiban24589@gmail.com} \\
        \end{tabular}}
\maketitle
%\normalsize
\vspace{-1 cm}
\begin{abstract}
 A linear arrangement is a labeling or a numbering or a linear ordering of the vertices of a graph. In this paper we solve the minimum linear arrangement problem for bijective connection graphs (for short BC graphs) which include hypercubes, M\"{o}bius cubes, crossed cubes, twisted cubes, locally twisted cube, spined cube, $Z$-cubes, etc. as the subfamilies.
 %Further we study the problem for certain incomplete BC graphs.
\end{abstract}

\vspace{-0.2 cm}
\keyw{Minimum linear arrangement, BC graphs}

\vspace{-0.3 cm}
\section{Introduction}

Graph layout problems are a particular class of combinatorial optimization problems whose goal is to find a linear layout of an input graph in such a way that a certain objective function is optimized. In the literature, there are plenty of layout problems are discussed, such as Linear Arrangement, Bandwidth, Cutwidth, Modified Cut, Sum Cut, Edge Bisection and Vertex Bisection \cite{Petit2011}. A large number of relevant problems in different domains formulated as graph layout problems include VLSI circuit design, network reliability, information retrieval, numerical analysis, computational biology, single machine job scheduling, automatic graph drawing and topology awareness of overlay networks \cite{Diaz2002, Raoufi2013}. The problems are hard in general but known to be solvable in certain restricted classes of graphs \cite{Petit2011}.

A linear arrangement $f$ of an undirected graph $G=(V,E)$ with $n$ nodes is a bijective function $f:V\rightarrow\{1,2,\ldots,n\}$. A linear arrangement is also called a labeling or a numbering or a linear ordering of the vertices of a graph. In \cite{Harper1964}, the Minimum Linear Arrangement (MinLA) problem is formulated as follows: Given a graph $G=(V,E)$, find a linear arrangement $f$ that minimizes $\sum_{(u,v)\in E}|f(u)-f(v)|$. Linear arrangements are a particular case of embedding graphs in $d$-dimensional grids or other graphs. The case in which a graph with $n$ vertices must be embedded into a path $P_{n}$ is perhaps the simplest nontrivial embedding problem. The MinLA problem is NP-complete for  bipartite graphs \cite{Even1975} and permutation graphs \cite{Cohen2006}.

% The graph classes for which problem is polynomial time solvable are incomplete hypercubes  \cite{Miller2015}, Chord graphs \cite{Raoufi2013}, Hypercubes \cite{Harper1964},  $d$-dimensional $c$-ary cliques \cite{Ellis1964}, De Bruijn graph of order 4 \cite{Harper1970}, Rectangular grids, 2-dimensional cylinder  \cite{Muradyan1980}, Square grids  \cite{Mitchison1986}, Trees  \cite{Chung1988}, Complete $p$-partite graphs \cite{Muradyan1988}, Outerplanar graphs  \cite{Frederickson1988}, Certain Halin graphs  \cite{Easton1996}, Proper interval graphs \cite{Safro2002}, Augmented hypercubes \cite{Manuel2011}, Folded hypercubes  \cite{Rajasingh2011}, Circulant graphs  \cite{MaArRaRa13}, Petersen graphs \cite{RaArRaMa11},  Locally twisted cubes  \cite{Arockiaraj2015} and Directed grid graphs  \cite{Rajasingh2015}.

\section{Preliminaries}

The following edge isoperimetric problems are used as tools to solve the MinLA problem. MinLA has been computed for regular graphs such as hypercubes \cite{Harper1964}, circulant graphs \cite{MaArRaRa13}, folded hypercubes \cite{Rajasingh2011}, Petersen graphs \cite{RaArRaMa11} chord graphs \cite{Raoufi2013} and locally twisted cubes \cite{Arockiaraj2015} using edge isoperimetric problem. In this paper, we compute the MinLA for certain families of regular graphs such as BC graphs.
%Recently Miller et al. \cite{Miller2015} computed MinLA for incomplete hypercubes which are not regular. In this paper we extend the result and compute the MinLA for more general incomplete hypercubes which are subfamilies of incomplete BC networks.

\paragraph{Problem 1 :} \cite{Bezrukov1988} For a given $m$, if $\theta_{G}(m)=\underset{A\subseteq V\text{, }\left\vert A\right\vert =m}{\min }\left\vert \theta _{G}(A)\right\vert $ where $\theta _{G}(A)=\{(u,v)\in E:u\in A,v\notin A\}$, then the problem is to find $A\subseteq V$ with $\left\vert A\right\vert =m$ such that $\theta _{G}(m)=\left\vert \theta_{G}(A)\right\vert $.

\paragraph{Problem 2 :} \cite{Bezrukov1988} For a given $m$, if $I_{G}(m)=\underset{A\subseteq V\text{, }\left\vert A\right\vert =m}{\max }\left\vert I_{G}(A)\right\vert $ where $I_{G}(A)=\{(u,v)\in E:u,v\in A\}$, then the problem is to find $A\subseteq V$ with $\left\vert A\right\vert =m$ such that $I_{G}(m)=\left\vert I_{G}(A)\right\vert $. Such a set A is called an optimal set.

%A linear arrangement $f$ of $G$ is said to be an optimal order if $f^{-1}(\{1,2,\ldots,i\})$ is an optimal set
%for any $1\leq i \leq |V(G)|$.

\begin{defn} Let $G$ and $H$ be finite graphs. An \textit{embedding} of $G$ into $H$ is a pair $(f,P_f)$
defined as follows:
\vspace{-0.1 cm}
\begin{enumerate}
\item $f$ is a one-to-one map from $V(G)$ to $V(H)$
\item $P_f$ is a one-to-one map from $E(G)$ to $%
\{P_{f}(u,v):P_{f}(u,v)$ is a path in $H$ between $f(u)$ and $%
f(v)$, for $(u,v)\in E(G)\}.$
\end{enumerate}
\end{defn}

For brevity, we denote the pair $(f,P_f)$ as $f$. The \textit{expansion} of an embedding $f$ is the ratio of the number of vertices of $H$ to the number of vertices of $G$. In this paper, we consider embeddings with expansion one.

The \textit{congestion }of an embedding $f$ of $G$ into $H$ is the
maximum number of edges of the graph $G$ that are embedded on any single
edge $e$ of $H$. Let $EC_{f}(e)$ denote the number of edges $(u,v)$ of $G$
such that $e$ is in the path $P_{f}(u,v)$ between $f(u)$ and $f(v)$ in
$H$. In other words,

\begin{equation*}
EC_{f}(e)=\left\vert \left\{ (u,v)\in E(G):e\in
P_{f}(u,v)\right\} \right\vert
\end{equation*}%
where $P_{f}(u,v)$ denotes the path between $f(u)$ and $f(v)$ in $H$
with respect to $f$.
Further, if $S$ is any subset of $E(H)$, then we define $EC_{f}(S)=\underset{e\in S}{\sum }EC_{f}(e)$.
\begin{defn}
The wirelength of an embedding $f$ of $G$ into $H$
is given by
\vspace{-0.1 cm}
\begin{equation*}
WL_{f}(G,H)=\underset{e\in E(H)}{\sum }EC_{f}(e)
\end{equation*}
\vspace{-0.1 cm}
The \textit{wirelength} of $G$ into $H$ is defined as
\begin{equation*}
WL(G,H)=\min WL_{f}(G,H)
\end{equation*}%
where the minimum is taken over all embeddings $f$ of $G$ into $H$.
\end{defn}

When $H$ is a path, we represent $WL_f(G,H)$ by $LA_f(G)$ and represent $WL(G,H)$ by \textit{MinLA}$(G)$.

\begin{lem}\label{lem1}
The MinLA of a graph $G$ of order $n$ is given by
$$MinLA(G) \geq ~\overset{n-1}{\underset{i=1}\sum}~\theta_G(i).$$
\end{lem}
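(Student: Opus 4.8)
The plan is to relate the linear arrangement wirelength to the edge isoperimetric quantities $\theta_G(i)$ by slicing the path $P_n$ at each of its $n-1$ internal edges. Fix any linear arrangement $f\colon V(G)\to\{1,2,\ldots,n\}$, and for each $i\in\{1,\ldots,n-1\}$ let $e_i$ denote the edge of $P_n$ joining position $i$ to position $i+1$. Cutting $P_n$ at $e_i$ separates the vertices into the \emph{prefix set} $A_i=f^{-1}(\{1,\ldots,i\})$ of size $i$ and its complement. An edge $(u,v)\in E(G)$ contributes $|f(u)-f(v)|$ to $LA_f(G)$, and since the unique path in $P_n$ between $f(u)$ and $f(v)$ uses exactly the edges $e_i$ with $\min\{f(u),f(v)\}\le i<\max\{f(u),f(v)\}$, that edge contributes $1$ to $EC_f(e_i)$ for precisely those $i$. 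Hence $EC_f(e_i)$ counts exactly the edges of $G$ with one endpoint in $A_i$ and the other outside, i.e. $EC_f(e_i)=|\theta_G(A_i)|$.

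Next I would sum over all cut positions. By the wirelength identity when $H=P_n$ we have
\begin{equation*}
LA_f(G)=\sum_{e\in E(P_n)}EC_f(e)=\sum_{i=1}^{n-1}EC_f(e_i)=\sum_{i=1}^{n-1}|\theta_G(A_i)|.
\end{equation*}
Since $A_i$ is \emph{some} subset of $V$ of cardinality $i$, the definition of $\theta_G(i)$ as the minimum of $|\theta_G(A)|$ over all $A$ with $|A|=i$ gives $|\theta_G(A_i)|\ge\theta_G(i)$ for each $i$. Adding these inequalities yields $LA_f(G)\ge\sum_{i=1}^{n-1}\theta_G(i)$.

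Finally, since this bound holds for every embedding $f$, it holds for the one achieving the minimum, so $MinLA(G)=\min_f LA_f(G)\ge\sum_{i=1}^{n-1}\theta_G(i)$, which is the claim. The argument is essentially a bookkeeping exercise; the only point requiring a little care is the first step — verifying that the edge congestion on the $i$-th path-edge is exactly the size of the edge boundary of the prefix set $A_i$ — which follows from the observation that a path in $P_n$ from position $a$ to position $b>a$ traverses precisely the edges $e_a,e_{a+1},\ldots,e_{b-1}$. No subtlety arises in the summation or the passage to the minimum, so I do not anticipate a genuine obstacle; this lemma is the standard lower-bound half of the isoperimetric approach to MinLA, and the real work of the paper will be exhibiting arrangements of BC graphs that meet it.
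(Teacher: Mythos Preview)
Your proof is correct and follows essentially the same approach as the paper: both slice the path at each edge $e_i=(i,i+1)$, observe that the congestion $EC_f(e_i)$ equals the edge boundary of the prefix set $f^{-1}(\{1,\ldots,i\})$ and hence is at least $\theta_G(i)$, and then sum over $i$. Your write-up is in fact more explicit than the paper's, which simply asserts $EC_f(S_i)\ge\theta_G(i)$ without spelling out the prefix-set justification.
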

\noindent \textit{Proof.}
For $1\le i<n$, let $S_i={(i,i+1)}$, then for any embedding $f$, we have
%$$EC_f(S_i)\ge \theta_G(i)$$
\begin{eqnarray*}
% \nonumber to remove numbering (before each equation)
  EC_f(S_i) &\geq & \theta_G(i) \\
  \therefore ~{\underset{f}\min} ~\overset{n-1}{\underset{i=1} \sum}~EC_f(S_i) &\geq & \overset{n-1}{\underset{i=1}\sum}~\theta_G(i) \\
  i.e., MinLA(G) &\geq & \overset{n-1}{\underset{i=1}\sum}~\theta_G(i). ~~~~\square
\end{eqnarray*}
%
%
%\textcolor{magenta}{Parthiban: As you suggest help us to write Proof for above lemma}
%
%\textcolor{magenta}{Qh: I only write a framework of the proof. You can modify it.}

\section{Main Results}

BC networks have received a great deal of attention in the past \cite{Fan2003, Fan2005, Fan2008, Tan2008}. Fan et al. \cite{Fan2003} proposed a family of interconnection networks called BC graphs. BC networks are a class of networks which include several well-known interconnection networks like hypercubes, M\"{o}bius cubes \cite{Shawn1995}, crossed cubes \cite{Efe1991}, twisted cubes \cite{Abraham1991}, locally twisted cube \cite{YaEvMe05}, spined cube \cite{Zhou2011} and Z-cube \cite{Zhu2015}. These variations of hypercubes generally possess certain superior properties over the hypercubes and are recognized as attractive alternatives to the hypercubes.

%In the literature, embedding problems such as Bandwidth \cite{Ha66}, Cutwidth \cite{Ha64}, MinLA \cite{AdHu73}, Cyclic bandwidth \cite{LeVoWi84}, Cyclic cutwidth \cite{ChTr98} and Cyclic wirelength \cite{BeSc98} are studied with variants of hypercubes as host graph. Since hypercube and certain variants of hypercubes are subclass of BC graphs, it is important to study the embedding problems onto BC graphs. To define an $n$-dimensional BC networks $X_n$, we first define a bijective connection as follows:

%\textcolor{magenta}{Parthiban: Please check mathematical definition of BC graphs}
%
%\textcolor{magenta}{Qh: I think the definition of BC graphs in reference [28] is better.}

\begin{defn}\rm{\cite{Fan2003,Tan2008}}
Let $G_1=(V_1,E_1)$, $G_2=(V_2,E_2)$ be two vertex disjoint graph of the same order.
A bijective connection between $G_1$ and $G_2$ is defined as an edge set
$E=\{(v,\phi(v))\}$, where $\phi: V_1\rightarrow V_2$ is a bijection.
Define $G_1\oplus G_2=(V_1\cup V_2, E_1\cup E_2\cup E)$.
\end{defn}

An $n$-dimensional BC graph, denoted by $X_n$, is an $n$-regular graph with $2^n$ nodes and $n~2^{n-1}$ edges. The set of all the $n$-dimensional BC graphs is called the family of the $n$-dimensional BC graphs,
denoted by $ \ell_n$. We now define $X_n$ mathematically as follows:

\begin{defn} \rm{\cite{Fan2003,Tan2008}} The one-dimensional BC network $X_1$ is a complete graph with two vertices, $K_2$.
The family of the one-dimensional BC network is defined as $\ell_1=\{K_2\}$.
When $n\ge2$, $G= X_n \in \ell_n$ if and only if $G=G_1\oplus G_2$ for some $G_1,G_2\in \ell_{n-1}$.
\end{defn}

%\vspace{0.5 cm}
%\textbf{To Parthiban: Write a proper definition of BC graphs.}

\begin{lem}  \rm{\cite{Tan2008}}
\label{lemma1} Let $G$ be a $n$-dimensional BC graph.  For an integer $m$, which can be uniquely written as $ m = \sum\limits_{i=1}^{r-1} {2^{l_i}}$ for some nonnegative integers $r$ and $l_0 > l_1 > \ldots >l_{r-1}$, then the maximum number of edges joining vertices from a set of $m$ vertices is $I_G(m) = \sum\limits_{i=0}^{r-1} {(l_i/2+i)2^{l_i}}$, where $1 \leq m \leq 2^n$, $n \geq 1$.
\end{lem}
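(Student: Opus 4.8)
The plan is to prove the identity by induction on the dimension $n$, exploiting the recursive definition $X_n=X_{n-1}\oplus X_{n-1}$. Throughout I write $I(m):=\sum_{i=0}^{r-1}(l_i/2+i)2^{l_i}$ for the claimed value (with $I(0):=0$), where $m=\sum_{i=0}^{r-1}2^{l_i}$, $l_0>l_1>\cdots>l_{r-1}\ge 0$; the goal is to show $I_G(m)=I(m)$ for every $G\in\ell_n$ and every $1\le m\le 2^n$, so in particular the value does not depend on which graph of $\ell_n$ we pick. Two elementary facts about $I$ will drive everything. The first is the one-step recursion: if $2^l\le m<2^{l+1}$, say $m=2^l+m'$ with $0\le m'<2^l$, then peeling off the leading dyadic term gives
\[ I(m)=\tfrac{l}{2}\,2^l+m'+I(m'), \]
which is just a reindexing of the defining sum. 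The second is the superadditivity inequality, which I will call $(\star)$:
\[ I(a)+I(b)+\min(a,b)\le I(a+b)\qquad\text{for all integers }a,b\ge 0. \]

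Granting $(\star)$, the induction runs as follows. The base case $n=1$ is immediate: $X_1=K_2$ gives $I_G(1)=0=I(1)$ and $I_G(2)=1=I(2)$. For the step, let $G=G_1\oplus G_2$ with $G_1,G_2\in\ell_{n-1}$, so the crossing set $E=\{(v,\phi(v)):v\in V_1\}$ is a perfect matching of $V_1$ to $V_2$. Given any $A\subseteq V(G)$ with $|A|=m$, set $A_j=A\cap V_j$ and $m_j=|A_j|$; the edges inside $A$ are those inside $A_1$, those inside $A_2$, and the matching edges with both ends in $A$, of which there are at most $\min(m_1,m_2)$. Since $m_j\le|V_j|=2^{n-1}$, the induction hypothesis gives $I_{G_j}(m_j)=I(m_j)$, whence
\[ |I_G(A)|\le I(m_1)+I(m_2)+\min(m_1,m_2)\le I(m_1+m_2)=I(m) \]
by $(\star)$; thus $I_G(m)\le I(m)$. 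For the reverse inequality I would exhibit an $m$-set meeting the bound. If $m\le 2^{n-1}$, embed an optimal $m$-set of $G_1$ into $V_1$; it already induces $I_{G_1}(m)=I(m)$ edges. If $2^{n-1}<m\le 2^n$, take all of $V_1$ together with an optimal $(m-2^{n-1})$-set $A_2$ of $G_2$; since every vertex of $A_2$ is matched into $V_1$, this set induces $|E_1|+I_{G_2}(m-2^{n-1})+(m-2^{n-1})=(n-1)2^{n-2}+I(m-2^{n-1})+(m-2^{n-1})$ edges, and by the recursion (with $l=n-1$ when $m<2^n$, and a one-line check when $m=2^n$) this is exactly $I(m)$. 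Combining the two bounds completes the induction.

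The crux is therefore $(\star)$, and this is the step I expect to be the main obstacle. One self-contained route is induction on $a+b$: assuming $a\le b$ and letting $k$ be determined by $2^{k}\le b<2^{k+1}$, one compares the dyadic blocks containing $a$, $b$, and $a+b$ and applies the recursion for $I$ to each of the (at most a handful of) cases; after the common $\tfrac{\cdot}{2}2^{\cdot}$ terms cancel, every case reduces to an instance of $(\star)$ with strictly smaller arguments. The delicate case — which I would single out — is when $a$ and $b$ both lie in the same block $[2^{k-1},2^k)$ while $a+b$ spills into $[2^k,2^{k+1})$; there the reduction is least transparent and needs the recursion applied twice. A slicker alternative, at the cost of quoting the hypercube edge-isoperimetric theorem of Harper and Bernstein, is to observe that $I(m)$ equals the number of edges induced by the initial segment $\{0,1,\dots,m-1\}$ of $Q_N$ for large $N$ (a block-by-block count that reproduces the defining sum), that these initial segments are edge-maximal, and that they are nested; placing an edge-maximal $b$-set and a nested edge-maximal $a$-set in the two halves of $Q_{N+1}$ then yields exactly $\min(a,b)=a$ crossing edges, which is $(\star)$.
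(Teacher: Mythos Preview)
The paper does not prove this lemma at all: it is quoted verbatim from Tan, Yu and Park (2008) and treated as a black box, with the paper only using its consequence $\theta_G(m)=nm-2I_G(m)$ to compute $\sum_m\theta_G(m)$. So there is no ``paper's own proof'' to compare against; you have supplied an independent argument where the authors simply cite.

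On its own merits your argument is sound. The recursion $I(2^l+m')=\tfrac{l}{2}2^l+m'+I(m')$ is correct (it is exactly the shift $i\mapsto i-1$ in the defining sum), the inductive split of $A$ across $G_1\oplus G_2$ with the matching bound $\min(m_1,m_2)$ is the right decomposition, and your explicit constructions for the lower bound do hit $I(m)$ in both ranges. The only piece not fully written out is the superadditivity $(\star)$; your second route via nested initial segments in $Q_{N+1}$ is clean and rigorous once Harper--Bernstein is granted as an external input, whereas the direct dyadic case analysis you sketch first is doable but, as you correctly flag, fiddly in the case where $a$ and $b$ sit in the same block while $a+b$ overflows. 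If you want the proof to be self-contained (not relying on Harper--Bernstein, since $Q_n$ is itself a BC graph), you should carry out that case analysis in full; otherwise the alternative route is preferable.
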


Note that this implies $\theta_G(m)=nm-2I_G(m)=\sum\limits_{i=0}^{r-1} {(n-l_i-2i)2^{l_i}}.$ And hence
$$\sum\limits_{m=1}^{2^n-1}\theta_{G}(m)=
\sum\limits_{k=0}^{n-1}2^k
\sum\limits_{i=0}^{n-k-1}\left((n-k-2i)2^k\left(\begin{array}{c}n-k-1\\ i\end{array}\right)
\right),$$
where for a sequence $n>l_0 > l_1 > \ldots >l_{r-1}\ge0$ with $l_i=k$ for some $0\le i<n$, 
there are $2^k$ choices for $l_{i+1},\ldots, l_{r-1}$
and $\left(\begin{array}{c}n-k-1\\ i\end{array}\right)$ choices for $l_{0},\ldots, l_{i-1}$. Then

$$\begin{array}{rcl}
\sum\limits_{m=1}^{2^n-1}\theta_{G}(m)&=&
\sum\limits_{k=0}^{n-1}2^k\left(
2^{n-1}(n-k)-2^{k+1}(n-k-1)\sum\limits_{i=1}^{n-k-1}\left(\begin{array}{c}n-k-2\\ i-1\end{array}\right)\right)\\
&=&\sum\limits_{k=0}^{n-1}2^k\left(
2^{n-1}(n-k)-2^{k+1}(n-k-1)2^{n-k-2}\right)\\
&=&\sum\limits_{k=0}^{n-1}2^{n-1+k}\ =\ 2^{n-1}(2^n-1).
\end{array}
$$

Now, we define a class of linear arrangement from BC graph to path by induction.

\begin{defn}
For any $n\ge0$, define $P_{2^n}=\{1,2,\cdots, 2^n\}\subset\mathbb{Z}$.
Let $G=X_n$ be the $n$-dimensional BC graph $G$.
If $n=1$, writing $V(G)=\{v_1,v_2\}$, we define $f(v_1)=1$, $f(v_2)=2$.
We call it a $1$-dimensional BC structure linear arrangement.
If $n>1$, then there exist $(n-1)$-dimensional BC graphs $G_1$, $G_2$ such that $G=G_1\oplus G_2$.
Let $f_1: V(G_1)\rightarrow P_{2^{n-1}}$, $f_2: V(G_2)\rightarrow P_{2^{n-1}}$
be two $(n-1)$-dimensional BC structure linear arrangements.
Define $f: V(G)\rightarrow P_{2^n}$ as follows. For any $v\in V(G_1)$, let $f(v)=f_1(v)$ and
for any $v\in V(G_2)$, let $f(v)=f_2(v)+2^{n-1}$. We call it a $n$-dimensional BC structure linear arrangement.
\end{defn}
%\subsection{MinLA of BC Networks}
%The next theorem is immediately follows from Lemma \ref{lemma1} and Theorem \ref{thm1}.

\begin{thm}\label{thm3}
The MinLA of BC graph $X_n$ is $$MinLA(X_n)=\sum_{i=1}^{2^n-1}\theta_{X_n}(i)=2^{n-1}(2^n-1).~~~\square$$
\end{thm}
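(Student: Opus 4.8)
The plan is to prove the two inequalities separately. The lower bound $MinLA(X_n)\ge 2^{n-1}(2^n-1)$ is immediate: Lemma~\ref{lem1} applied to $X_n$ (which has order $2^n$) gives $MinLA(X_n)\ge\sum_{i=1}^{2^n-1}\theta_{X_n}(i)$, and the closed-form computation carried out just before the theorem, which rests on Lemma~\ref{lemma1}, evaluates this sum to $2^{n-1}(2^n-1)$. So the whole content is the matching upper bound, for which I would show that the $n$-dimensional BC structure linear arrangement $f$ defined above attains $LA_f(X_n)=2^{n-1}(2^n-1)$, whence $MinLA(X_n)\le 2^{n-1}(2^n-1)$.

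I would establish this by induction on $n$, the induction hypothesis being exactly the statement ``$LA_f(X_n)=2^{n-1}(2^n-1)$ for the BC structure arrangement $f$ of every $X_n\in\ell_n$''. The base case $n=1$ is the single edge of $K_2$ placed at positions $1,2$, contributing $1=2^{0}(2^{1}-1)$. For the inductive step, write $X_n=G_1\oplus G_2$ with $G_1,G_2\in\ell_{n-1}$, and recall that $f$ restricts to the BC structure arrangement $f_1$ on $V(G_1)$, whose labels lie in $\{1,\dots,2^{n-1}\}$, and to $f_2(\cdot)+2^{n-1}$ on $V(G_2)$, whose labels lie in $\{2^{n-1}+1,\dots,2^{n}\}$. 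Partition $E(X_n)=E_1\cup E_2\cup E$ where $E_i=E(G_i)$ and $E=\{(v,\phi(v))\}$ is the bijective connection. The edges of $E_1$ contribute $\sum_{(u,v)\in E_1}|f_1(u)-f_1(v)|=LA_{f_1}(G_1)$; the edges of $E_2$ contribute $LA_{f_2}(G_2)$, since the common shift $2^{n-1}$ cancels in each difference; and for each $(v,\phi(v))\in E$ one label is $\le 2^{n-1}$ and the other is $\ge 2^{n-1}+1$, so $|f(v)-f(\phi(v))|=f_2(\phi(v))+2^{n-1}-f_1(v)$ with no case analysis needed.

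Summing the $E$-contribution over $v\in V(G_1)$, the term $\sum_{v}f_1(v)$ equals $\sum_{j=1}^{2^{n-1}}j$, and because $\phi$ is a bijection $\sum_{v}f_2(\phi(v))$ also equals $\sum_{j=1}^{2^{n-1}}j$; these cancel, leaving $\sum_{v}2^{n-1}=2^{n-1}\cdot 2^{n-1}=2^{2n-2}$. Hence $LA_f(X_n)=LA_{f_1}(G_1)+LA_{f_2}(G_2)+2^{2n-2}$, and substituting the induction hypothesis $LA_{f_i}=2^{n-2}(2^{n-1}-1)$ gives $2^{n-1}(2^{n-1}-1)+2^{2n-2}=2^{2n-1}-2^{n-1}=2^{n-1}(2^n-1)$, closing the induction. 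Combining this with the lower bound yields the theorem. I do not anticipate a genuine obstacle; the one point that deserves care is the clean cancellation of the cross terms in the bijective-connection sum, which is precisely what makes the recursion meet the isoperimetric lower bound exactly and shows the wirelength is independent of the chosen decomposition $X_n=G_1\oplus G_2$ and of the bijection $\phi$.
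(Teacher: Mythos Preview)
Your proposal is correct and follows essentially the same route as the paper: the lower bound via Lemma~\ref{lem1} together with the closed-form evaluation, and the upper bound by induction on $n$ showing that every BC structure linear arrangement attains $LA_f(X_n)=2^{n-1}(2^n-1)$, decomposing $X_n=G_1\oplus G_2$ and computing the cross contribution as $2^{2n-2}$. Your handling of the cross term, exploiting the bijection so that $\sum_v f_2(\phi(v))$ and $\sum_v f_1(v)$ cancel, is just a slightly more explicit version of the paper's ``direct computation'' $2^{n-1}+2\sum_{i=1}^{2^{n-1}-1}i=2^{2n-2}$.
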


%\textcolor{magenta}{Parthiban: Kindly help us to write the proof for $MinLA(X_n)$ = $2^{n-1}(2^n-1)$ using lemma 3.3, lemma 2.5.}
%
%\textcolor{magenta}{Qh: I remember that some reference proved that there exist $f$ such that
%$f^{-1}(\{1,\cdots,i\})\}$ for any $1\le i\le 2^n$ is 'cubal' or 'incomplete cube';
%and if $X_n=G_1\oplus G_2$, then $1\le f(G_1)\le 2^{n-1}$, $2^{n-1}+1\le f(G_2)\le 2^{n}$.
%So please add corresponding result and reference here. We call this kind of linear arrangements 'BC linear arrangements'.}

\begin{proof}
By Lemma \ref{lem1} and analysis above,
we have $MinLA(X_n)\geq \sum_{i=1}^{2^n-1}\theta_{X_n}(i)=2^{n-1}(2^n-1)$. To prove the equality, we need to show that for any $X_n\in\ell_n$, for any BC structure linear arrangement $f_n$,
$$LA_{f_n}(X_n)=2^{n-1}(2^n-1).$$
It is direct to show that $LA_{f_1}(X_1)=1=2^0(2^1-1).$

Suppose for $n\ge 2$, any $X_{n-1}\in\ell_{n-1}$ and any BC structure linear arrangement $f_{n-1}$,
$$LA_{f_{n-1}}(X_{n-1})=2^{n-2}(2^{n-1}-1).$$

Take any $X_{n}\in\ell_{n}$ and any $n$-dimensional BC linear arrangement $f_{n}: V(X_n)\rightarrow P_n$.
Then there is $G_1,G_2\in \ell_{n-1}$ such that $X_n=G_1\oplus G_2$ and there are
$(n-1)$-dimensional BC structure linear arrangements
$f': V(G_1)\rightarrow P_{n-1}$, $f'': V(G_2)\rightarrow P_{n-1}$ such that
for any $v\in V(G_1)$, $f_n(v)=f'(v)$,
for any $v\in V(G_2)$, $f_n(v)=f''(v)+2^{n-1}$.
Then
$$LA_{f_n}(X_n) = LA_{f'}(G_1)+LA_{f''}(G_2)+\sum_{a\in G_1,\ b\in G_2}|P_{f_n}(a,b)|.$$

%\begin{eqnarray*}
%% \nonumber to remove numbering (before each equation)
%  LA_{f_n}(X_n) &=& LA_{f'}(G_1)+LA_{f''}(G_2)+\sum_{a\in G_1,\ b\in G_2}|P_{f_n}(a,b)| \\
%   &=& \sum_{i=1}^{2^{n-1}-1}\theta_{X_{n-1}}(i)+\sum_{i=1}^{2^{n-1}-1}\theta_{X_{n-1}}(i)+\sum_{a\in G_1,\ b\in G_2}|P_{f_n}(a,b)|.
%\end{eqnarray*}

By induction hypothesis, $LA_{f'}(G_1)=LA_{f''}(G_2)=2^{n-2}(2^{n-1}-1)$.

By direct computation,
$$\sum_{a\in G_1,\ b\in G_2}|P_{f_n}(a,b)|=2^{n-1}+2\sum_{i=1}^{2^{n-1}-1}i=2^{2n-2}.$$
Thus for any $n$-dimensional BC structure linear arrangement $f_n$,
we have $$LA_{f_n}(X_n)=2^{n-1}(2^n-1).$$.
Hence, the theorem is proved by induction.
\end{proof}

%\subsection{MinLA of Incomplete BC Networks}
%
%In a large interconnection network, nodes or edges often develop faults. It is very important to study graph embedding in the case where some nodes or edges in the graphs have become faulty \cite{FaLiPaJi07}. It is desirable to find an embedding of a guest graph into a host graph where all faulty nodes and edges have been removed.
%
%\textcolor{magenta}{Kindly check definition.}
%
%\begin{defn}
%The incomplete BC graph $G$ is obtained from $X_i$ and $X_j$, $i>j$ and $i,j\geq 2$ such that the induced subgraph $G[X_{i^*},X_j]$ is isomorphic to $X_{j+1}$, where $X_{i^*}$ is a subgraph of $X_i$ and $i^*=j$.
%\end{defn}
%
%
%\begin{rem}
%The graph $G$ contains $2^i+2^j$ vertices and $i~2^{i-1}+2^{j-1}(j+2)$ edges.
%\end{rem}

\section{Concluding Remarks}

In this paper, we computed the MinLA for BC graphs. Finding the other parameters, such as bandwidth, cutwidth, edge bisection and vertex bisection for BC graphs are under investigation.

\end{document}